\documentclass[11pt]{article}
\usepackage[titletoc]{appendix}
\usepackage[labelsep=period]{caption}
\usepackage{bm}
\usepackage{fullpage}
\usepackage{epsfig}
\usepackage{graphicx}
\usepackage{subcaption}
\usepackage{latexsym}
\usepackage{amsmath}
\usepackage{amsfonts}
\usepackage{amssymb}
\usepackage{mathrsfs}
\usepackage{pifont}
\usepackage{yhmath}
\usepackage{bbm}
\usepackage{dsfont}
\usepackage{eufrak}
\usepackage{wasysym}
\usepackage{underscore}
\usepackage{epstopdf}
\usepackage{fontenc}
\usepackage{amsthm}
\newtheorem{theorem}{Theorem}[section]

\newtheorem{lemma}[theorem]{Lemma}

\DeclareMathAlphabet{\mathpzc}{OT1}{pzc}{m}{it}

\def\qed{\hfill \rule{4pt}{7pt}}
\usepackage{fullpage,graphicx}

\DeclareMathAlphabet{\mathpzc}{OT1}{pzc}{m}{it}
\title{\bf Proof of the Langberg--M\'edard Multiple Unicast Conjecture}
\author{\vspace{2mm}  Kai Cai$^{a}$
\quad Guangyue Han$^{a}$\thanks{Supported in part by a grant from the Research Grants Council of Hong Kong (Project No. 17303825).}
\quad Mengxi Yang$^{b}$\thanks{Corresponding author. E-mail: yangmx221b@outlook.com.}
\quad Wenan Zang$^{a}$\thanks{Supported in part by a grant from the Research Grants Council of Hong Kong (Project No. 17304525).} \\
$\stackrel{a}{}$ Department of Mathematics \\ University of Hong Kong \\ Hong Kong, China \smallskip\\
$\stackrel{b}{}$ School of Mathematical Sciences \\ University of Science and Technology of China \\ Hefei 230026, China}
\begin{document}
\date{}
\maketitle

\begin{abstract}
Let $\mathcal{N}=(V,A,S, R)$ be a directed network, where $D=(V,A)$ is a digraph, $S=\{s_1,s_2,\ldots,s_k\}\subseteq V$ is a set
of senders, and $R=\{r_1,r_2,\ldots,r_k\}\subseteq V$ is a set of receivers. We prove that if, for each $1\le j \le k$, there 
exist $k$ pairwise arc-disjoint directed paths in $D$, one from each $s_i$ to $r_j$, then the underlying undirected network of ${\cal N}$, 
with unit capacities, admits a feasible $(s_1,s_2,\ldots,s_k)$--$(r_1,r_2,\ldots,r_k)$ multiflow of value $1$ for each pair $(s_i,r_i)$, 
$i=1,2,\ldots,k$. This establishes the multiple unicast conjecture proposed by Langberg and M\'edard in 2009.

\vskip 4mm

\noindent {\bf MSC 2020 subject classification.} Primary: 05C21, 90B10, 94A05.

\noindent {\bf Key words.} Network coding, multiple unicast conjecture, multicommodity flow, strong reachability, fractional routing.

\end{abstract}

\newpage
\section{Introduction}

Network coding is a communication paradigm in which intermediate nodes may perform coding operations on the data they receive, combining 
multiple information flows before forwarding them in order to improve throughput, reliability, or efficiency. The multiple unicast problem 
asks whether independent messages can be transmitted simultaneously from each sender to its corresponding receiver over a shared network. A 
rate vector specifies the rates achieved simultaneously by the sender-receiver pairs. When transmission is restricted to routing along paths, 
it can be modeled by a feasible multiflow $f$, whose associated rate vector records the rates carried by $f$. A central question in network coding 
theory is whether, and to what extent, allowing intermediate nodes to code can enlarge the set of achievable rate vectors beyond those achievable 
by feasible multiflows. The {\em multiple unicast conjecture} of Li and Li~\cite{LiLi} asserts that, in undirected networks, network coding 
and feasible multiflow routing achieve exactly the same set of rate vectors. As observed by Adler et al. \cite{A}, the validity of this conjecture would 
also provide an affirmative answer to a long-standing open problem in I/O complexity and yield nontrivial lower bounds in the computational 
complexity of oblivious Turing machines. This conjecture resisted tremendous research efforts for more than two decades (see, for instance, 
\cite{A,BC,CH1,CH2,CH3,CH4,H,J,LM,LiLi,LW,QT,XL,Y,ZLL}) until recently, when Braverman and He \cite{BH} and Zhang, Li, and Li \cite{ZLL} 
independently constructed counterexamples to it.

The present paper is devoted to the study of a weaker version of this conjecture. Consider a directed network $\mathcal{N}=(V,A,S, R)$, where 
$D=(V,A)$ is a digraph, $S=\{s_1,s_2,\ldots, s_k\}\subseteq V$ is a set of senders, and $R=\{r_1,r_2,\ldots,r_k\}\subseteq V$ is a set of receivers,
with $S \cap R=\emptyset$. We say that $\mathcal{N}$ is {\em strongly reachable} if, for each $1\le j \le k$, there exist $k$ pairwise arc-disjoint 
directed paths in $D$, one from each $s_i$ to $r_j$. Let $G=(V,E)$ be the underlying undirected graph of $D$, which arises from $D$ by disregarding 
the directions of all arcs. We say that $\overline{\cal N}:=(V,E,S,R)$ is the underlying undirected network of ${\cal N}$. Throughout we assume 
that $G$ is connected and each edge of $\overline{\cal N}$ has unit capacity.

For each $1\le i \le k$, let ${\cal P}_i$ be the set of all simple $s_i$-$r_i$ paths (undirected) in $G$. An {\em $(s_1,s_2,\ldots,s_k)$--$(r_1,r_2, \ldots,r_k)$ 
multiflow} in $\overline{\cal N}$ is a vector-valued function $f=(f_1, f_2,\ldots,f_k)$ such that $f_i: {\cal P}_i\to{\mathbb R}_+$ is an $s_i$-$r_i$ 
flow for each $i$. (As usual, ${\mathbb R}_+$ stands for the set of all nonnegative real numbers.) We call $f$ {\em feasible} if $\sum_{i=1}^k \sum_{e\in P\in{\cal P}_i} 
f_i(P) \leq 1$ for each $e\in E$, and call the corresponding vector $(\operatorname{val}(f_1), \operatorname{val}(f_2),\ldots,\operatorname{val}(f_k))$ the 
{\em rate} of $f$, where $\operatorname{val}(f_i):=\sum_{P\in{\cal P}_i}f_i(P)$ is the value of $f_i$ for each $i$. 

We are now ready to present the aforementioned weaker version formulated by Langberg and M\'edard \cite{LM} in 2009.

\vskip 3mm
\noindent {\bf Multiple Unicast Conjecture} (Langberg and M\'edard \cite{LM}). {\em For every strongly reachable network ${\cal N}$, 
there is a feasible multiflow in $\overline{\cal N}$ with rate $(1,1,\ldots,1)$.}
\vskip 3mm

Langberg and M\'edard \cite{LM} showed that strong reachability of ${\cal N}$ guarantees the existence a feasible multiflow $f$ in 
$\overline{\cal N}$ with rate $(\frac{1}{3},\frac{1}{3}, \ldots, \frac{1}{3})$. Cai and Han improved this reachable rate to $(\frac{8}{9},\frac{8}{9},
\ldots,\frac{8}{9})$ in \cite{CH1}, further obtained $(\frac{11}{12},\frac{11}{12},\ldots,\frac{11}{12})$ for $k=3,4$ in \cite{CH2}, and 
confirmed the Langberg--M\'edard conjecture for $k=3$ in \cite{CH3}. The purpose of this paper is to prove the Langberg--M\'edard conjecture 
for every $k$.
\newpage

\begin{theorem}\label{thm:main}
Let ${\cal N}$ be a strongly reachable network with $k$ sender-receiver pairs $(s_i,r_i)$ for $i=1,2,\ldots,k$. Then there is a feasible 
$(s_1,s_2,\ldots,s_k)$--$(r_1,r_2,\ldots,r_k)$ multiflow in $\overline{\cal N}$ of value $1$ for each pair $(s_i,r_i)$.
\end{theorem}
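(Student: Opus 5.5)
By LP duality (Japanese theorem / Onaga--Kakusho--Iri), a feasible multiflow of rate $(1,\ldots,1)$ in $\overline{\cal N}$ exists if and only if for every length function $\ell:E\to\mathbb{R}_+$ we have
\[
\sum_{e\in E}\ell(e)\;\ge\;\sum_{i=1}^k \operatorname{dist}_\ell(s_i,r_i).
\]
The plan is to prove this \emph{metric inequality} from strong reachability. The right-hand side is a positive combination of cut metrics only when $G$ is special, so we should not reduce to cuts. Instead we work directly with an arbitrary $\ell$ and use strong reachability to exhibit enough edge-disjoint structure to pay for every $\operatorname{dist}_\ell(s_i,r_i)$.

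First we normalize. Subdivide edges so that $\ell$ may be taken as $0/1$-valued with integral distances; this is a standard reduction, since rational $\ell$ suffices and subdivision preserves strong reachability after directing the subdivided arcs consistently. Next we exploit strong reachability. For each $j$, Menger gives $k$ arc-disjoint paths $P_{1j},\dots,P_{kj}$ with $P_{ij}$ running from $s_i$ to $r_j$. Equivalently, for every arc set $C$ separating some nonempty set of senders from $r_j$, we have $|C|\ge$ the number of senders separated. Then $s_i\to r_j$ followed (undirected) by $r_j\to$ ... gives walks; in particular, $P_{ij}\cup P_{ii}$ reversed is irrelevant. The useful combination is this: for a permutation/pairing, the undirected walk $P_{i j}$ followed by the reverse of $P_{j j}$ followed by $P_{j i}$ connects $s_i$ to $r_i$. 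Hence
\[
\operatorname{dist}(s_i,r_i)\le \ell(P_{ij})+\ell(P_{jj})+\ell(P_{ji}).
\]
Summing with appropriate weights over $j$ recovers only the rate-$1/3$ bound of Langberg--M\'edard. The key step must therefore be a sharper, global charging argument.

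The approach I would take is induction on $|A|$ (or on $k$) with a minimal counterexample. Take $\mathcal N$ strongly reachable with the metric inequality failing for some $\ell$, and with $|A|+|V|$ minimum. Minimality yields structural facts: every arc is used essentially (deleting it destroys strong reachability), there are no arcs into senders or out of receivers, each sender has out-degree exactly $k$... Then consider the reverse direction: orient everything from $S$ toward $R$ and, in the acyclic-like structure, use the dual characterization of strong reachability. For every set $X\subseteq V$ containing $r_j$, the in-degree satisfies $d^-(X)\ge |S\setminus X|$. Now consider the distance-layer sets $X_t=\{v:\operatorname{dist}_\ell(v, \cdot)\ldots\}$. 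We would apply the in-degree bound to carefully chosen sets $B_i(t)=\{v:\operatorname{dist}_\ell(s_i,v)<t\}$ and their unions/intersections, using submodularity of $d^-$. The aim is to show
\[
\sum_e \ell(e)=\int_0^\infty |\delta(\text{level sets})|\,dt\ \ge\ \sum_i \operatorname{dist}(s_i,r_i).
\]
This amounts to showing that at each threshold $t$, the number of edges crossing the relevant level boundaries is at least the number of indices $i$ with $t<\operatorname{dist}(s_i,r_i)$.

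The main obstacle is this last counting step. The balls around different senders overlap, and a single crossing edge may be counted for several $i$. I would first try an uncrossing argument: replace the family of balls by a laminar family, via submodularity of $d^-$ and of the undirected cut, so that each edge is charged once. The strong-reachability inequality $d^-(X)\ge|S\setminus X|$ applied to the sets of the laminar family would then supply one crossing arc per unfinished commodity. If uncrossing fails, the fallback is a reduction via arc-splitting (Mader/Lovász-type splitting preserving strong reachability) at a non-terminal vertex. This would reduce to networks where all non-terminals have been eliminated, where the inequality can be checked directly.
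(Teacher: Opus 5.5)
Your reduction to the metric inequality $\sum_i d_\ell(s_i,r_i)\le\sum_{e}\ell(e)$ is correct and is the paper's first step (Theorem~\ref{thm:length}). Your cut form of strong reachability, $|\delta^-_D(X)|\ge|S\setminus X|$ whenever some receiver lies in $X$, is Lemma~\ref{lem:cuts}(i) applied to $\bar X$. After that, however, there is a genuine gap. The claim you need is that at every threshold $t$ the crossing edges number at least $|\{i: t<d_\ell(s_i,r_i)\}|$, with each edge charged at most $\ell(e)$ in total. That claim \emph{is} the theorem, and you leave it as ``the main obstacle.''

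Your remedies do not close it:
\begin{itemize}
\item Charging the sender balls $B_i(t)$ separately overcounts, because the $k$ functions $d_\ell(s_i,\cdot)$ may all change by $\ell(e)$ across the same edge. Uncrossing by submodularity lowers total cut size but gives no per-edge bound of the form ``total charge $\le\ell(e)$.''
\item Merging the balls fails for a different reason. $B_i(t)$ can contain other receivers $r_j$ long before $t$ reaches $d_\ell(s_i,r_i)$. Once a set contains every receiver, strong reachability only bounds its cut by the number of senders \emph{outside} it, which is useless for unfinished commodities whose senders are inside. That portion of the distances has to be paid through sets that separate receivers, via $|\delta_G(X)|\ge k$ when both $X$ and $\bar X$ contain receivers (Lemma~\ref{lem:cuts}(ii)). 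Your plan never invokes this.
\item The splitting-off fallback is undeveloped. You cite no splitting theorem that preserves this multi-target arc-disjointness condition. Also, a network with no non-terminals is still a general instance on $2k$ vertices, not something one can ``check directly.''
\item Some of your minimality facts are false. The network with arcs $s_1a$, $s_2b$, $ar_1$, $ar_2$, $br_1$, $br_2$ is minimally strongly reachable, yet each sender has out-degree $1$.
\end{itemize}

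The missing idea is to stop charging commodity by commodity. Instead, split every distance as $d(v,r_i)\le q(v)+W$, where $q$ vanishes on all receivers and $W$ is a purely receiver-to-receiver quantity. The paper takes a receiver triple of maximum perimeter and sets $W=\frac12(\rho_{12}+\rho_{13}+\rho_{23})=\lambda_1+\lambda_2+\lambda_3$, with $\lambda_m+\lambda_n=\rho_{mn}$. It then uses McShane extensions $F_m(v)=\min_i\bigl(d(v,r_i)+g_m(i)\bigr)$ of carefully prescribed receiver values, together with the map $Q=\frac12(L+M)$, $H=\frac12(M-L)$ placed in the coordinate of the minimum (Lemma~\ref{lem:three}). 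This produces $q\ge0$ and $u_m\in[0,\lambda_m]$ with $u_m(r_m)=\lambda_m$, $u_m(r_n)=0$ for $n\in\{1,2,3\}\setminus\{m\}$, and $|q(v)-q(w)|+\sum_m|u_m(v)-u_m(w)|\le d(v,w)$. The level sets then do the accounting:
\begin{itemize}
\item positive level sets of $q$ miss every receiver, so directed cuts pay $\sum_i q(s_i)$;
\item level sets of each $u_m$ separate two receivers, so undirected cuts of size at least $k$ pay $k\lambda_m$;
\item the Lipschitz-sum condition ensures no edge is charged more than $\ell(e)$ (Lemma~\ref{lem:certificate}).
\end{itemize}
Without a construction of this kind, your outline does not reach the theorem.
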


We point out that the statement no longer holds if we replace strong reachability in our theorem by that $G$ contains $k$ edge-disjoint paths 
(undirected) to each receiver $r_j$, one from each sender $s_i$; a counterexample is given by $G:=s_1r_2r_1s_2$ (a path with four vertices).

Onaga \cite{O} obtained the following characterization for the existence of feasible multiflows in a general setting, which can be derived from the 
duality theorem of linear programming (see also Schrijver, Theorem 70.1 \cite{s03} and Onaga and Kakusho \cite{Ona}).

\begin{theorem}[Onaga \cite{O}]\label{thm:length}
Let $G=(V,E)$ be a supply graph with a nonnegative capacity $c(e)$ on each $e\in E$, and let $H=(T, R)$ be a demand graph with a nonnegative 
demand $d_i$ on each $s_it_i \in R$. Then there exists a feasible multiflow $f=(f_1, f_2, \ldots, f_k)$, with $k:=|R|$, subject to $c$ 
and satisfying $\operatorname{val}(f_i)=d_i$ for each $i$, if and only if for each length function $\ell: E\to {\mathbb R}_+$,
\[\sum_{i=1}^k d_i \cdot d_\ell(s_i,t_i) \leq\sum_{e\in E} c(e)\ell(e),\]
where $d_\ell(s_i,t_i)$ denotes the length of a shortest $s_i$-$t_i$ path with respect to $\ell$ in $G$.
\end{theorem}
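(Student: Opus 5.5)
The plan is to prove both directions directly: necessity by a double-counting of the lengths of the paths used by the flow, and sufficiency by applying Farkas' lemma to the path formulation of the multiflow problem. Throughout, for each $i$ let ${\cal P}_i$ be the finite set of simple $s_i$-$t_i$ paths in $G$, and for $P\in{\cal P}_i$ write $\ell(P):=\sum_{e\in P}\ell(e)$.

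\emph{Necessity.} Suppose $f$ is feasible with $\operatorname{val}(f_i)=d_i$ for each $i$, and fix any $\ell\ge 0$. Every $P\in{\cal P}_i$ satisfies $\ell(P)\ge d_\ell(s_i,t_i)$. Multiplying by $f_i(P)\ge 0$ and summing over $P$ and $i$ gives
\[
\sum_i d_i\, d_\ell(s_i,t_i)\le \sum_i\sum_{P\in{\cal P}_i} f_i(P)\,\ell(P).
\]
Exchanging the order of summation, the right-hand side equals
\[
\sum_{e\in E}\ell(e)\sum_i\sum_{e\in P\in{\cal P}_i} f_i(P).
\]
The inner sum is the load on $e$, which is at most $c(e)$, so the whole expression is at most $\sum_e c(e)\ell(e)$.

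\emph{Sufficiency.} I first relax the demand constraints to $\operatorname{val}(f_i)\ge d_i$. This loses nothing, because a feasible $f$ with $\operatorname{val}(f_i)\ge d_i$ can be scaled: replacing $f_i$ by $\frac{d_i}{\operatorname{val}(f_i)}f_i$ (when $\operatorname{val}(f_i)>0$) only decreases every edge load, so capacities remain respected. The relaxed system in the variables $x=(f_i(P))\ge 0$ is:
\begin{itemize}
\item $-\sum_{P\in{\cal P}_i} f_i(P)\le -d_i$ for each $i$;
\item $\sum_i\sum_{e\in P\in{\cal P}_i} f_i(P)\le c(e)$ for each $e\in E$.
\end{itemize}
This is a finite system of the form $Mx\le b$, $x\ge 0$. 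By Farkas' lemma (in the form: $Mx\le b,\ x\ge0$ is infeasible iff there is $u\ge 0$ with $u^{T}M\ge 0$ and $u^{T}b<0$), infeasibility yields multipliers $z_i\ge 0$ and $y_e\ge 0$ with two properties. First, for every $i$ and every $P\in{\cal P}_i$, the column condition reads $\sum_{e\in P}y_e - z_i\ge 0$. Second,
\[
\sum_e c(e)y_e-\sum_i d_iz_i<0 .
\]

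Now set $\ell:=y$. The column condition says $z_i\le \ell(P)$ for every $P\in{\cal P}_i$, so $z_i\le d_\ell(s_i,t_i)$; here we use that a shortest walk can be taken to be a simple path because $\ell\ge 0$. Since $d_i\ge 0$, it follows that
\[
\sum_i d_i\,d_\ell(s_i,t_i)\ge \sum_i d_iz_i>\sum_e c(e)\ell(e),
\]
which violates the hypothesis. Hence the relaxed system is feasible, and by the scaling remark an exact multiflow exists.

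The degenerate case where some $s_i$ and $t_i$ are disconnected also needs a check. There ${\cal P}_i=\emptyset$, and I adopt the convention $d_\ell(s_i,t_i)=+\infty$. If $d_i>0$, the choice $\ell\equiv0$ then makes the left-hand side infinite while the right-hand side is $0$, so the hypothesis fails and there is nothing to prove; if $d_i=0$, the empty flow serves.

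The only delicate step is getting the sign conventions in Farkas' lemma right. The reason for using the inequality form $\operatorname{val}(f_i)\ge d_i$ rather than equality is that it forces $z_i\ge 0$. That sign is exactly what allows replacing $z_i$ by the larger quantity $d_\ell(s_i,t_i)$ in the final inequality. The rest of the argument is routine bookkeeping.
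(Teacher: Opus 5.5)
Your proof is correct and takes the same route the paper indicates. The paper does not prove Onaga's theorem; it cites it as a consequence of linear programming duality (Schrijver, Theorem 70.1), and your argument (double counting for necessity, Farkas' lemma on the path formulation with the relaxation $\operatorname{val}(f_i)\ge d_i$ forcing $z_i\ge 0$, then scaling back) is exactly that derivation.
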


When we restrict our attention to the undirected network $\overline{\cal N}$ specified above, the demand graph $H$ consists only of $k$ disjoint
edges $s_ir_i$  (which are not necessarily edges of $G$). Furthermore, the capacity $c(e)=1$ for each $e \in E$, and the demand $d_i=1$ 
for each $1\le i \le k$. In view of Theorem \ref{thm:length}, it suffices to establish the following statement in order to prove Theorem \ref{thm:main}.

\begin{theorem}\label{cor:length}
Let ${\cal N}$ be a strongly reachable network with $k$ sender-receiver pairs $(s_i,r_i)$ for $i=1,2,\ldots,k$. Then $\overline{\cal N}$ 
satisfies the following inequality for each length function $\ell: E\to {\mathbb R}_+$,
\[\sum_{i=1}^k d_\ell(s_i,r_i) \leq\sum_{e\in E} \ell(e).\]
\end{theorem}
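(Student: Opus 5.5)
We would prove Theorem~\ref{cor:length} by turning each length function into a charging scheme on the arcs of the strongly reachable digraph. First, a reduction: by perturbation and scaling we may take $\ell$ to be positive and integer-valued. Each edge $e$ of length $\ell(e)$ is then subdivided into $\ell(e)$ unit edges, with the corresponding arc of $D$ subdivided into a directed path of $\ell(e)$ arcs. Subdivision preserves strong reachability, since arc-disjoint paths stay arc-disjoint. After this reduction the inequality becomes the purely combinatorial statement $\sum_{i=1}^k d_G(s_i,r_i)\le |E|$, where $d_G$ is the hop distance. So it suffices to show: in every strongly reachable network, the number of edges is at least the sum of the undirected distances between paired terminals.

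For this combinatorial statement we use the strong reachability structure directly. Fix $j$ and let $\mathcal{Q}^j=\{Q^j_1,\dots,Q^j_k\}$ be the $k$ arc-disjoint directed paths with $Q^j_i$ running from $s_i$ to $r_j$. The aim is a fractional assignment of arcs to pairs $(s_i,r_i)$ such that each arc carries total weight at most $1$ and pair $i$ receives weight at least $d_G(s_i,r_i)$. The natural candidate is to average over $j$. The concatenation $Q^j_i$ followed by the reverse of $Q^j_{i'}$ is an undirected $s_i$--$s_{i'}$ walk through $r_j$. Likewise, $Q^i_i$ alone is a directed $s_i$--$r_i$ path, so $d_G(s_i,r_i)\le |Q^i_i|$. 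The trouble is that the paths $Q^i_i$ for different $i$ may share arcs, so we cannot simply add up their lengths.

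To resolve these overlaps we would use a distance-labelling (potential) argument. Define $\phi_i(v)=d_G(s_i,v)$. For a single pair, the arcs of $Q^i_i$ telescope: $\sum_{e=uv\in Q^i_i}(\phi_i(v)-\phi_i(u))\ge d_G(s_i,r_i)$, and each summand is at most $1$. We would then seek, for every arc $uv$, a bound of the form $\sum_i \lambda_i(uv)\,(\phi_i(v)-\phi_i(u))^+\le 1$, where the weights $\lambda$ come from a convex combination over the $k$ path systems $\mathcal{Q}^1,\dots,\mathcal{Q}^k$. The key claim to prove is an exchange lemma. If an arc $a$ lies on $Q^j_i$, then using arc-disjointness within $\mathcal{Q}^j$ and the existence of $\mathcal{Q}^i$, we can reroute so that the distance increments charged to $a$ by different pairs cancel, except for at most one unit. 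We would prove this by induction on $|A|$. Take a minimal counterexample, so that deleting or contracting any arc destroys strong reachability. Then analyse a critical arc: a cut of size exactly $k$ separating $S$ from some $r_j$. Uncrossing such tight cuts, via submodularity of the in-degree function, should split the network into smaller strongly reachable pieces to which induction applies, and the distance sums glue across the cut.

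The main obstacle is the gluing step at a tight cut. Distances in $G$ are undirected and can shortcut across the cut in ways the directed structure does not see, so the inequality for the pieces does not obviously combine to give it for the whole. The first attempt would be to contract one side of the tight cut into a single vertex, which preserves strong reachability on the other side by the max-flow min-cut theorem. Then we would check that $d_G(s_i,r_i)$ is bounded by the sum of the distances in the two contracted networks, using the fact that exactly one unit of each path system crosses each cut arc. The example $s_1r_2r_1s_2$ in the paper shows that directedness must enter exactly here.
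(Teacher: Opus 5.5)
Your reduction to positive integer lengths and hop distances in a subdivided network is correct, but it only rephrases the statement. Everything after it is a plan whose central step is missing.

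Without further structure, the fractional assignment you seek (arc loads at most $1$, pair $i$ receiving at least $d_G(s_i,r_i)$) exists if and only if $\sum_i d_G(s_i,r_i)\le|E|$. So it is the theorem itself. Averaging over path systems does not rescue the telescoping either: $Q^j_i$ ends at $r_j$, so the increments of $\phi_i$ along it sum to $d_G(s_i,r_j)$, and only $Q^i_i$ produces $d_G(s_i,r_i)$. The paths $Q^i_i$ come from different path systems and can overlap. Controlling those overlaps is the job of your ``exchange lemma'', which is never stated precisely, let alone proved; it is the whole content of the theorem.

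The induction is also set up loosely. A minimal counterexample may be taken minimal under arc deletion, since deletion only increases distances. Contraction, however, is no reduction at all: it removes one edge while possibly lowering $\sum_i d_G(s_i,r_i)$ by $k$. Moreover, a deletion-critical arc lies in a cut with $|\delta^+_D(X)|=s(X)$ and some $r_j\notin X$, which need not have size $k$ or contain $S$.

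The gluing step you flag cannot be supplied from the inductive hypothesis, because at a tight cut it must carry the entire theorem. With $G_1=G/\bar X$ and $G_2=G/X$ one has $|E(G_1)|+|E(G_2)|=|E|+|\delta_G(X)|$. So you would need $\sum_i d_G(s_i,r_i)\le\sum_i\bigl(d_{G_1}(s_i,r_i)+d_{G_2}(s_i,r_i)\bigr)-|\delta_G(X)|$. Take the most favourable case $S\subseteq X$, $R\subseteq\bar X$, $|\delta_G(X)|=k$.
\begin{itemize}
\item All receivers, respectively all senders, become a single vertex, so the inequalities for $G_1$ and $G_2$ follow from Menger alone and carry no information.
\item For each $i$, a shortest $s_i$--$r_i$ path gives $d_G(s_i,r_i)\ge d_G(s_i,\bar X)+d_G(X,r_i)-1=d_{G_1}(s_i,r_i)+d_{G_2}(s_i,r_i)-1$.
\item Hence the required inequality forces equality for every $i$. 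That means a single crossing edge that ends a shortest path from $s_i$ to $\bar X$ and starts a shortest path from $X$ to $r_i$, and nothing guarantees this.
\end{itemize}

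What is missing is a mechanism that turns strong reachability into a bound on $\ell$. The paper integrates the cut bounds $|\delta^+_D(X)|\ge s(X)$ (some receiver outside $X$) and $|\delta_G(X)|\ge k$ (receivers on both sides) over level sets of potentials. These are $q\ge0$, vanishing at all receivers, and $u_m$ with values in $[0,\lambda_m]$ attaining both extremes at receivers ($m=1,2,3$). They satisfy $|q(v)-q(w)|+\sum_m|u_m(v)-u_m(w)|\le d(v,w)$ and $d(s_i,r_i)\le q(s_i)+W$ with $W=\lambda_1+\lambda_2+\lambda_3$. This gives $\sum_i d(s_i,r_i)\le\sum_i q(s_i)+kW\le\sum_e\ell(e)$. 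The potentials are built from a maximum-perimeter receiver triple via McShane extensions $F_m$ and the map $x\mapsto(Q(x),U(x))$, which is $1$-Lipschitz from the max-norm to the sum-norm.
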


For this length function $\ell$, let $d:=d_\ell$ throughout this paper. For the ends $v,w$ of an arbitrary edge $e$ of $G$, the triangle 
inequality gives $|d(v,r_j)-d(w,r_j)|\leq d(v,w)\leq\ell(e)$ for $1\le j\le k$. This motivates us to construct functions on $V$ based on the distances 
to the receivers, such that the sum of their absolute changes between any two vertices $v,w$ is at most $d(v,w)$. These functions will 
also allow us to bound the left-hand side of the preceding inequality by two terms, each of which will be estimated using a different 
inequality for cut size.

The remainder of this paper is organized as follows. In Section 2, we derive two cut inequalities from strong reachability and identify the 
properties required of the functions. In Section 3, we prescribe the values of the required functions at the receivers. 
In Section 4, we extend these values to all vertices and complete the proof of Theorem \ref{cor:length}. 

\section{Cut Bounds and Required Functions}

Let $\mathcal{N}=(V,A,S,R)$ and $\overline{\cal N}:=(V,E,S,R)$ be the networks specified in Theorem \ref{cor:length}. Recall that $G=(V,E)$ is 
the underlying undirected graph of $D=(V,A)$. For each $X\subseteq V$, write $\bar{X}:=V-X$. Let $\delta_D^+(X)$ be the set of all arcs of 
$D$ directed from $X$ to $\bar{X}$, and let $\delta_G(X)$ be the set of all edges of $G$ between $X$ and $\bar{X}$. We call $\delta_D^+(X)$
the {\em outgoing directed cut} defined by $X$ in $D$, and call $\delta_G(X)$ the undirected cut defined by $X$ in $G$. Set $s(X):=|X \cap S|$, 
which is the number of senders contained in $X$. 

The following lemma gives lower bounds on the sizes of the above cuts.

\begin{lemma}\label{lem:cuts}
For each $X\subseteq V$, the following two statements hold.
\begin{itemize}
\vspace{-1mm}
\item[(i)] If a receiver lies outside $X$, then $|\delta_D^+(X)|\geq s(X)$.
\vspace{-1mm}
\item[(ii)] If both $X$ and $\bar{X}$ contain a receiver, then $|\delta_G(X)|\geq k$.
\end{itemize}
\end{lemma}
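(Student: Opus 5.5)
For (i), fix a receiver $r_j\in\bar X$ and apply strong reachability to this $j$. This gives $k$ pairwise arc-disjoint directed paths $P_1,\dots,P_k$ in $D$, with $P_i$ running from $s_i$ to $r_j$. For each sender $s_i\in X$, the path $P_i$ starts in $X$ and ends at $r_j\in\bar X$. Walking along it, we find a first arc that leaves $X$, and this arc lies in $\delta_D^+(X)$. Since the paths are pairwise arc-disjoint, the arcs chosen for the $s(X)$ senders in $X$ are distinct. Hence $|\delta_D^+(X)|\ge s(X)$.

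For (ii), pick receivers $r_j\in\bar X$ and $r_{j'}\in X$. Applying (i) to $X$ (which has the receiver $r_j$ outside it) gives $|\delta_D^+(X)|\ge s(X)$. Applying (i) to $\bar X$ (which has the receiver $r_{j'}$ outside it) gives $|\delta_D^+(\bar X)|\ge s(\bar X)$.

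Every arc counted in $\delta_D^+(X)$ or $\delta_D^+(\bar X)$ becomes an edge of $\delta_G(X)$. Conversely, each edge of $\delta_G(X)$ comes from exactly one arc of $D$, directed either out of $X$ or out of $\bar X$. Therefore
\[|\delta_G(X)|=|\delta_D^+(X)|+|\delta_D^+(\bar X)|\ge s(X)+s(\bar X)=k,\]
where the last equality holds because every sender lies in exactly one of $X$ and $\bar X$.

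We treat senders as $k$ distinct vertices; if they could coincide, $s(X)$ would count them with multiplicity. The same argument goes through provided the paths for coinciding senders are still arc-disjoint.

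No step is a real obstacle. The only point needing care is that distinct paths contribute distinct cut arcs, and this is exactly what arc-disjointness guarantees.
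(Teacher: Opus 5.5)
Your proof is correct and follows the paper's argument essentially verbatim. For (i) you extract one cut arc per sender in $X$ from the arc-disjoint paths to a receiver outside $X$; for (ii) you apply (i) to both $X$ and $\bar X$ and use $|\delta_G(X)|=|\delta_D^+(X)|+|\delta_D^+(\bar X)|$, which, as in the paper, implicitly treats $G$ as the underlying multigraph (each edge comes from exactly one arc), a point you correctly make explicit.
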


{\bf Proof}. (i) By hypothesis, $r_j \notin X$ for some $1 \le j \le k$. By strong reachability of $\mathcal{N}$, there exist $k$ pairwise 
arc-disjoint directed paths in $D$, one from each $s_i$ to $r_j$. For each $s_i\in X$, the corresponding path begins in $X$ and ends in $\bar{X}$, 
so it contains at least one arc $e_i$ in $\delta_D^+(X)$. These arcs $e_i$, for $s_i\in X$, are distinct, as the paths we consider are 
arc-disjoint. It follows that $|\delta_D^+(X)|\geq s(X)$. 

\vskip 1mm
(ii) By hypothesis and (i), we have $|\delta_D^+(X)|\geq s(X)$ and $|\delta_D^+(\bar{X})|\geq s(\bar{X})$. So $|\delta_G(X)|=|\delta_D^+(X)|+
|\delta_D^+(\bar{X})|\geq s(X)+s(\bar{X})=k$, as desired. \qed\\

Given a real number $x$, its {\em positive part} is defined by $[x]_+:=\max\{x,0\}$. For a real-valued function $f$ on $V$, the set 
$\{v\in V:f(v)\geq t\}$ is called the {\em upper level set} of $f$ at level $t$. A nonnegative function that vanishes at every receiver has 
no receiver in any positive level set, so Lemma \ref{lem:cuts}(i) applies. A function taking both its minimum and maximum at receivers has receivers 
on both sides of each intermediate level set, so Lemma \ref{lem:cuts}(ii) applies. The following lemma makes both observations quantitative.

\begin{lemma}\label{lem:levels}
The following two statements hold.
\begin{itemize}
\vspace{-1mm}
\item [(i)] If a function $q: V\to{\mathbb R}_+$ is $0$ at every receiver, then 
\vspace{-3mm}
\[\sum_{i=1}^k q(s_i)\leq \sum_{(v,w)\in A} \, [q(v)-q(w)]_+.\]
\vspace{-5mm}
\item [(ii)] If a function $g: V\to[0,\lambda]$, where $\lambda\geq0$, takes the value
$\lambda$ at some receiver and the value $0$ at another receiver when $\lambda> 0$, then

\vspace{-3mm}
\[ k\lambda\leq\sum_{(v,w)\in A} \, |g(v)-g(w)|.\]
\vspace{-2mm}
\end{itemize}
\end{lemma}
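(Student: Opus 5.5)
Both parts follow from the layer-cake (coarea) representation combined with Lemma \ref{lem:cuts}. For any real-valued function $h$ on $V$ and any arc $(v,w)\in A$, we have
\[[h(v)-h(w)]_+=\int_0^\infty \mathbf{1}\bigl[h(v)\ge t>h(w)\bigr]\,dt,\]
once we restrict attention to thresholds $t$ in the range of $h$ (for nonnegative $h$ the integral from $0$ suffices). In words, $[h(v)-h(w)]_+$ is the measure of the set of levels $t$ for which the arc $(v,w)$ lies in $\delta_D^+(X_t)$, where $X_t:=\{u\in V: h(u)\ge t\}$ is the upper level set. Summing over arcs gives
\[\sum_{(v,w)\in A}[h(v)-h(w)]_+=\int |\delta_D^+(X_t)|\,dt .\]
Similarly, $|h(v)-h(w)|$ is the measure of levels $t$ separating $v$ and $w$, so $\sum_{(v,w)\in A}|h(v)-h(w)|=\int |\delta_G(X_t)|\,dt$, since each edge of $G$ comes from exactly one arc of $A$.

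For (i), take $h=q$. For every $t>0$, the set $X_t$ contains no receiver because $q$ vanishes at every receiver. Lemma \ref{lem:cuts}(i) then gives $|\delta_D^+(X_t)|\ge s(X_t)$, which is the number of $i$ with $q(s_i)\ge t$. Integrating over $t\in(0,\infty)$ yields
\[\int_0^\infty s(X_t)\,dt=\sum_{i=1}^k\int_0^\infty \mathbf{1}[q(s_i)\ge t]\,dt=\sum_{i=1}^k q(s_i),\]
which is the claimed inequality.

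For (ii), the case $\lambda=0$ is trivial, so assume $\lambda>0$ and take $h=g$. For each $t\in(0,\lambda]$, the set $X_t$ contains the receiver where $g=\lambda$. Its complement $\bar{X_t}=\{g<t\}$ contains the receiver where $g=0$. Lemma \ref{lem:cuts}(ii) therefore gives $|\delta_G(X_t)|\ge k$, and integrating over $(0,\lambda]$ yields $\sum_{(v,w)\in A}|g(v)-g(w)|\ge k\lambda$.

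The only point needing care is making the coarea identity rigorous. Since $V$ is finite, $h$ takes finitely many values, and the integrals can be replaced by finite sums over the intervals between consecutive values of $h$. The level set $X_t$ is constant on each such interval, so the identity holds interval by interval. I expect no real obstacle here. The main thing to check is the boundary convention (using $\ge t$ together with $>h(w)$), which ensures that each arc is counted exactly on an interval of length $[h(v)-h(w)]_+$.
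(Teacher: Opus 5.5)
Your proposal is correct and is essentially the paper's proof: the paper runs the same layer-cake argument in discretized form, listing the distinct values $0=t_0<\cdots<t_N$, applying Lemma~\ref{lem:cuts} to each upper level set $X_j$ with weight $\alpha_j=t_j-t_{j-1}$, and checking that each sender and each arc contribute $q(s_i)$ and $[q(v)-q(w)]_+$ (resp.\ $|g(v)-g(w)|$) to the weighted sums. One minor wording slip: by ``restricting to thresholds in the range of $h$'' you mean integrating over $t\in(0,\infty)$ for nonnegative $h$, which is what you actually use and is correct.
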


\vspace{-2mm}

{\bf Proof}. (i) We may assume that $q$ is not identically zero, otherwise the inequality holds trivially. Since $V$ is finite and $q$ vanishes at every receiver, 
its distinct values can be listed as $0=t_0<t_1<\cdots<t_N$, for some $N\geq 1$. For each $1\le j\le N$, let $X_j:=\{v\in V:q(v)\geq t_j\}$ and $\alpha_j:=t_j-t_{j-1}$.
Since $t_j>0$, every receiver lies outside $X_j$. By Lemma \ref{lem:cuts}(i), we have $|\delta_D^+(X_j)|\geq s(X_j)$. Multiplying both sides of this inequality by 
$\alpha_j$, and summing over all $j$ gives

\vskip 1mm
(1) $ \sum_{j=1}^N\alpha_j s(X_j)  \leq \sum_{j=1}^N\alpha_j|\delta_D^+(X_j)|.$

\vskip 1mm
Suppose $q(s_i)=t_m$. For $m \ge 1$, the sender $s_i$ is contained only in the sets $X_1, X_2,\ldots, X_m$, so its total contribution to
$\sum_{j=1}^N\alpha_j s(X_j)$ is $\sum_{j=1}^m\alpha_j  =(t_1-t_0)+\cdots+(t_m-t_{m-1})  = t_m=q(s_i)$.
For $m=0$, this is a sum with no terms, which has value $0$. Thus

\vskip 1mm
(2) $ \sum_{j=1}^N\alpha_j s(X_j)  =\sum_{i=1}^k\ \sum_{1\leq j\leq N, \, s_i\in X_j} \alpha_j  =\sum_{i=1}^kq(s_i).$

\vskip 1mm
Consider an arbitrary arc $(v,w)\in A$. It belongs to $\delta^+_D(X_j)$ if and only if $q(w)<t_j\leq q(v)$. Let $t_n:=q(v)$ and $t_m:=q(w)$.
If $t_n>t_m$, then this arc is counted only in $|\delta_D^+(X_j)|$ for $j=m+1, m+2, \ldots, n$, so its total contribution to $\sum_{j=1}^N\alpha_j|\delta_D^+(X_j)|$ is 
$\sum_{j=m+1}^n\alpha_j=t_n-t_m=q(v)-q(w)$. If $t_n \leq t_m$, this arc leaves none of the sets $X_j$ and hence contributes $0$ to 
$\sum_{j=1}^N\alpha_j|\delta_D^+(X_j)|$. Therefore

\vskip 1mm
(3) $\sum_{j=1}^N\alpha_j|\delta_D^+(X_j)| =\sum_{(v,w)\in A} \ \sum_{1\leq j\leq N, \, q(w)<t_j\leq q(v)}\alpha_j
 =\sum_{(v,w)\in A} \, [q(v)-q(w)]_+.$
\vskip 2mm

Substituting (2) and (3) into (1) establishes the desired inequality.

\vskip 1mm

(ii) We may assume that $\lambda>0$, otherwise the desired inequality holds trivially. 
Let us list the distinct values of $g$ as $0=t_0<t_1<\cdots<t_N=\lambda$. Define $X_j:=\{v\in V: g(v)\geq t_j\}$ and 
$\alpha_j:=t_j-t_{j-1}$ for $1\le j \le N$. Then the receiver with value $\lambda$ belongs to every $X_j$, and the 
receiver with value $0$ belongs to every $\bar{X}_j$. By Lemma \ref{lem:cuts}(ii), we have $|\delta_G(X_j)|\geq k$ for 
each $1\le j \le N$. Multiplying both sides of this inequality by $\alpha_j$ and summing over all $j$, we obtain
\vskip 1mm

(4) $ \sum_{j=1}^N\alpha_j|\delta_G(X_j)| \geq k\sum_{j=1}^N\alpha_j  =k(t_N-t_0)=k\lambda$.

\vskip 1mm

Consider an arbitrary edge $vw \in E$. It belongs to $\delta_G(X_j)$ if and only if $\min\{g(v),g(w)\}<t_j\leq\max\{g(v),g(w)\}$.
Let $t_m:=\min\{g(v),g(w)\}$ and $t_n:=\max\{g(v),g(w)\}$. If $t_n>t_m$, the contribution of this edge to $\sum_{j=1}^N\alpha_j|\delta_G(X_j)|$ is
$\alpha_{m+1}+\alpha_{m+2}+\cdots+\alpha_n=t_n-t_m=|g(v)-g(w)|$. If $t_n=t_m$, the sum has no terms and is $0$. Adding these contributions 
over all edges yields

\vskip 1mm
(5) $\sum_{j=1}^N\alpha_j |\delta_G(X_j)|  =\sum_{vw \in E} |g(v)-g(w)| =\sum_{(v,w)\in A} |g(v)-g(w)|.$

\vskip 1mm
Substituting (5) into (4) establishes the desired inequality. \qed\\

To derive the inequality in Theorem \ref{cor:length}, we seek functions $q,u_1,u_2,u_3$ to which both inequalities in Lemma~\ref{lem:levels} are 
applicable. The sum of their absolute changes between any two vertices must be bounded by the distance between these two vertices, so that 
adding the cut inequalities does not count an edge length more than once. The following lemma summarizes the required conditions 
for the functions we seek.

\begin{lemma}\label{lem:certificate}
Let $\lambda_1,\lambda_2,\lambda_3\geq0$ and $W=\lambda_1+\lambda_2+\lambda_3$. Suppose there exist four real-valued functions
$q,u_1,u_2,u_3$ defined on $V$ satisfying the following conditions:
\begin{itemize}
\vspace{-1mm}
\item[(i)] $q\geq0$ and $q(r_i)=0$ for each $1\le i\le k$;
\vspace{-1mm}
\item[(ii)] $0\leq u_m\leq\lambda_m$, and $u_m$ takes the value $\lambda_m$ at some receiver and the value $0$ at another receiver when $\lambda_m>0$, 
for each $1\le m \le 3$;
\vspace{-1mm}
\item[(iii)] $|q(v)-q(w)|+\sum_{m=1}^3|u_m(v)-u_m(w)|\leq d(v,w)$ for each pair $v,w\in V$; and 
\vspace{-1mm}
\item[(iv)] $d(s_i,r_i)\leq q(s_i)+W$ for each $1\le i\le k$.
\end{itemize}
Then $\sum_{i=1}^k d(s_i,r_i) \leq\sum_{e\in E}\ell(e)$.
\end{lemma}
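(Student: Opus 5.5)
We sum condition (iv) over $i$ and bound the two resulting terms separately with Lemma \ref{lem:levels}. Summing (iv) gives
\[\sum_{i=1}^k d(s_i,r_i)\leq \sum_{i=1}^k q(s_i)+kW = \sum_{i=1}^k q(s_i)+\sum_{m=1}^3 k\lambda_m .\]
By condition (i), $q$ satisfies the hypotheses of Lemma \ref{lem:levels}(i), so $\sum_{i=1}^k q(s_i)\leq \sum_{(v,w)\in A}[q(v)-q(w)]_+$. By condition (ii), each $u_m$ with range in $[0,\lambda_m]$ satisfies the hypotheses of Lemma \ref{lem:levels}(ii) with $\lambda=\lambda_m$. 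Hence $k\lambda_m\leq \sum_{(v,w)\in A}|u_m(v)-u_m(w)|$ for $m=1,2,3$; the case $\lambda_m=0$ is trivial.

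Adding these four inequalities and using $[x]_+\leq |x|$, we get
\[\sum_{i=1}^k d(s_i,r_i)\leq \sum_{(v,w)\in A}\Big(|q(v)-q(w)|+\sum_{m=1}^3|u_m(v)-u_m(w)|\Big).\]
Condition (iii) bounds each summand by $d(v,w)$. For the arc $(v,w)$ with underlying edge $e=vw$, the triangle inequality gives $d(v,w)\leq \ell(e)$, because $e$ itself is a $v$-$w$ path of length $\ell(e)$.

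Finally, the sum over arcs must become a sum over edges. Since $G$ arises from $D$ by forgetting directions, the arcs of $A$ correspond bijectively to the edges of $E$; parallel or antiparallel arcs become parallel edges, as is already implicit in identity (5). So $\sum_{(v,w)\in A}\ell(vw)=\sum_{e\in E}\ell(e)$, which gives the claim. No step is a genuine obstacle. The only point needing care is that the arc-sum and edge-sum match exactly, and this is the identity already used in the proof of Lemma \ref{lem:levels}(ii).
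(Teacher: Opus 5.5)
Your proposal is correct and follows the paper's own argument: sum condition (iv), bound $\sum_i q(s_i)$ and each $k\lambda_m$ via Lemma~\ref{lem:levels}, combine using condition (iii), and finish with $d(v,w)\le\ell(vw)$ over the arcs. You also state explicitly two steps the paper leaves implicit, namely $[x]_+\le|x|$ and the bijection between the arcs of $A$ and the edges of $E$.
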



{\bf Proof}. By (i), Lemma~\ref{lem:levels}(i) applies to $q$. By (ii), Lemma~\ref{lem:levels}(ii) applies to each $u_m$. Adding all the corresponding 
inequalities and using (iii) and (iv) gives
\begin{align*}
 \sum_{i=1}^k d(s_i,r_i)
 &\leq \sum_{i=1}^k q(s_i)+kW = \sum_{i=1}^k q(s_i)+ \sum_{m=1}^3 k \lambda_m \\
 &\leq\sum_{(v,w)\in A}
 \left([q(v)-q(w)]_++\sum_{m=1}^3|u_m(v)-u_m(w)|\right)\\
 &\leq\sum_{(v,w)\in A}d(v,w) 
 \leq\sum_{e\in E}\ell(e),
\end{align*}
where the last inequality holds, because $d(v,w) \le \ell(e)$ for the edge $e=vw$.  \qed\\

The choice of three functions $u_1,u_2,u_3$, and hence of the three-dimensional transformation introduced below, is motivated by the receiver triple 
chosen in Section 3. The following transformation of three real numbers will enforce the inequality specified in Lemma \ref{lem:certificate}(iii).

For a triple $x=(x_1,x_2,x_3)\in{\mathbb R}^3$, sort the three values with repetitions retained. Let $L(x)$ and $M(x)$ be the smallest and middle values 
in $\{x_1,x_2,x_3\}$, respectively, and let $\mu(x)$ be a subscript $i \in \{1,2,3\}$ with $x_i=L(x)$ (we break ties arbitrarily). Define

\vskip 1.5mm
$\bullet$ $Q(x):=\frac{1}{2}\, [L(x)+M(x)]$,

\vskip 1.5mm
$\bullet$ $H(x):=\frac{1}{2}\, [M(x)-L(x)]$,

\vskip 1.5mm
$\bullet$ $U(x):=(U_1(x), U_2(x), U_3(x))$ with $U_m(x)=H(x)$ if $m=\mu(x)$ and $U_m(x)=0$ otherwise.

\vskip 1.5mm
\noindent Observe that if $L(x)=M(x)$, then the definition of $\mu(x)$ is not unique. Nevertheless, in this case $H(x)=0$ and 
hence $U_m(x)=0$ for all $1\le m \le 3$. So $U(x)$ is well defined.  

\begin{lemma}\label{lem:three}
For any $x,y\in{\mathbb R}^3$, the following inequality holds.

\vspace{-3mm}
\[ |Q(x)-Q(y)|+\sum_{m=1}^3|U_m(x)-U_m(y)|  \leq\max_{1\le m \le 3}|x_m-y_m|.\]
\end{lemma}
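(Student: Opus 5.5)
The plan is to reduce everything to two facts. First, the order statistics $L$ and $M$ are $1$-Lipschitz with respect to the sup-norm. Second, $Q\pm H$ recovers $M$ and $L$: we have $Q(x)+H(x)=M(x)$ and $Q(x)-H(x)=L(x)$. Set $\delta:=\max_{1\le m\le 3}|x_m-y_m|$. The left-hand side equals $|Q(x)-Q(y)|$ plus a term that depends only on whether $\mu(x)$ and $\mu(y)$ coincide. So I will split into two cases according to the indices $\mu(x),\mu(y)$.

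\emph{Step 1 (Lipschitz property of $L$ and $M$).} We have $x_m\le y_m+\delta$ for every $m$, so each order statistic of $x$ is at most the corresponding order statistic of $y$ plus $\delta$. By symmetry, $|L(x)-L(y)|\le\delta$ and $|M(x)-M(y)|\le\delta$. I would state this as a one-line observation: the $t$-th smallest value is monotone under coordinatewise increase and commutes with adding a constant.

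\emph{Step 2 (case $\mu(x)=\mu(y)=m$).} Here the only nonzero terms are $|U_m(x)-U_m(y)|=|H(x)-H(y)|$. Write $a:=M(x)-M(y)$ and $b:=L(x)-L(y)$. The left-hand side is then $\tfrac12|a+b|+\tfrac12|a-b|=\max\{|a|,|b|\}$, which is at most $\delta$ by Step 1. If $H(x)=0$ or $H(y)=0$, the choice of $\mu$ is free; since $U$ is well defined, I may choose the indices to agree whenever possible. The remaining situations are handled by the next case, so ties cause no trouble.

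\emph{Step 3 (case $\mu(x)=i\neq j=\mu(y)$).} Now the $U$-terms contribute $H(x)+H(y)$, since the two nonzero coordinates sit in different positions. By symmetry in $x$ and $y$, it suffices to bound
\[
Q(x)-Q(y)+H(x)+H(y)=M(x)-L(y).
\]
The key observation is that $M(x)$ is the smaller of the two coordinates of $x$ other than $x_i$. Since $j\ne i$, this gives $M(x)\le x_j$. Also $L(y)=y_j$ by the choice $\mu(y)=j$. Hence $M(x)-L(y)\le x_j-y_j\le\delta$. The symmetric bound $M(y)-L(x)\le y_i-x_i\le\delta$ handles the sign $Q(y)\ge Q(x)$. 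Together these give the inequality.

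The only delicate point is Step 3. There one must not estimate $|Q(x)-Q(y)|$ and $H(x)+H(y)$ separately, because that loses a factor; instead one combines them into $M(x)-L(y)$ and uses the index $j$ at which $y$ attains its minimum. Everything else is routine.
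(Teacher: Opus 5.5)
Your proposal is correct and follows essentially the same route as the paper. You use the sup-norm Lipschitz bounds on $L$ and $M$, the identity $\tfrac12|a+b|+\tfrac12|a-b|=\max\{|a|,|b|\}$ when $\mu(x)=\mu(y)$, and, when $\mu(x)\neq\mu(y)$, the combination $|Q(x)-Q(y)|+H(x)+H(y)=\max\{M(x)-L(y),M(y)-L(x)\}$ bounded by $x_j-y_j$ and $y_i-x_i$. Your remark about re-choosing $\mu$ at ties is harmless but unnecessary, since both cases go through for any admissible tie-breaking.
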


\vspace{-1mm}

{\bf Proof}. Let $\delta:=\max_m|x_m-y_m|$. Then 

\vskip 1mm
(1) $y_m-\delta\leq x_m\leq y_m+\delta$ for $1\le m \le 3$.

\vskip 1mm
Let $i:=\mu(x)$ and $j:=\mu(y)$; that is, $L(x)=x_i$ and $L(y)=y_j$. Observe that 

\vskip 1mm
(2) $|L(x)-L(y)|\leq\delta$ and $|M(x)-M(y)|\leq\delta$.

\vskip 1mm
Indeed, by definition and by (1), we have $L(y)-\delta\leq y_m-\delta\leq x_m$ for $1\le m \le 3$. So $L(y)-\delta\leq L(x)$. Furthermore, 
$L(x)\le x_j \leq y_j+\delta=L(y)+\delta$. Combining these two inequalities, we obtain $|L(x)-L(y)|\leq\delta$. Since two or three entries 
of $y$ are at most $M(y)$, the corresponding entries of $x$ are at most $M(y)+\delta$ by (1). Thus $M(x)\leq M(y)+\delta$. By symmetry,
we also have $M(y)\leq M(x)+\delta$. Therefore $|M(x)-M(y)|\leq\delta$. This proves (2).

\vskip 1mm
We proceed by considering two cases.
\vskip 1mm

{\bf Case 1.} $i=j$.

\vskip 1mm
In this case, $\sum_{m=1}^3 |U_m(x)-U_m(y)|=|H(x)-H(y)|$. From the definitions of $Q$ and $H$, we deduce that
\begin{align*}
 |Q(x)-Q(y)|+|H(x)-H(y)|
 &=\frac12\bigl|(L(x)-L(y))+(M(x)-M(y))\bigr|\\
 &\quad+\frac12\bigl|(M(x)-M(y))-(L(x)-L(y))\bigr|\\
 &=\max\{|L(x)-L(y)|,|M(x)-M(y)|\},\\
 &\leq\delta,
\end{align*}
where the second equality follows from the fact that $|a+b|+|a-b|=2\max\{|a|,|b|\}$ for $a, b \in {\mathbb R}$ (which can be seen immediately by squaring both 
sides), and the last inequality follows from (2).

\vskip 1mm
{\bf Case 2.} $i\neq j$.

\vskip 1mm
In this case, $\sum_{m=1}^3|U_m(x)-U_m(y)|=H(x)+H(y)$. Using the fact $|a|=\max\{a,-a\}$ for $a \in {\mathbb R}$ and the definitions of $Q$ and $H$, we obtain
\begin{align*}
 |Q(x)-Q(y)|+H(x)+H(y)
 &=\max\{Q(x)-Q(y),Q(y)-Q(x)\}+H(x)+H(y)\\
 &=\max\{M(x)-L(y),M(y)-L(x)\}.
\end{align*}
Since $i \neq j$, we have $x_j \geq M(x)$. Thus $M(x)-L(y) \leq x_j-y_j \leq |x_j-y_j|\leq \delta$. Similarly, $y_i\geq M(y)$, so $M(y)-L(x)\leq y_i-x_i 
\leq|y_i-x_i|\leq\delta$. Hence $|Q(x)-Q(y)|+H(x)+H(y) \le \delta$. 

\vskip 1mm
Therefore, the desired inequality holds in both cases.  \qed\\

A real-valued function $f$ on $V$ is called \emph{$1$-Lipschitz} with respect to $d$ if 
\[|f(v)-f(w)|\leq d(v,w) \hskip 3mm  \mbox{for any} \hskip 2mm v, w\in V.\] 

\noindent Suppose we have got three $1$-Lipschitz functions $F_1,F_2,F_3$. Define $F(v):=(F_1(v),F_2(v),F_3(v))$, $q(v):=Q(F(v))$, 
and $u_m(v):=U_m(F(v))$ for $v\in V$ and $1\le m \le 3$. By Lemma~\ref{lem:three}, for any $v,w\in V$, we have
\begin{align*}
 |q(v)-q(w)|+\sum_{m=1}^3|u_m(v)-u_m(w)|
 &\leq\max_{1\leq m\leq3}|F_m(v)-F_m(w)|\\
 &\leq d(v,w),
\end{align*}
which establishes condition~(iii) of Lemma~\ref{lem:certificate}. To ensure that $q(r_i)=0$ for each $1\le i \le k$, we shall prescribe $F(r_i)=g(i)$, 
where $g(i)=(g_1(i),g_2(i),g_3(i))$ is a permutation of $(-h,h,h)$ for some $h\geq0$ that may depend on $i$. Such a triple satisfies $Q(g(i))=0$, 
and its $U$-coordinates are a permutation of $(h,0,0)$. We construct these triples in Section 3 and extend the prescribed values to the functions 
$F_1,F_2,F_3$ on $V$ in Section 4.

\section{Prescribed Values at Receivers}

Recall that $g_m(i)$ is the value prescribed for $F_m(r_i)$ for $1\le m \le 3$. To find these values properly, let
us first determine the constants $W, \lambda_1, \lambda_2, \lambda_3$ as described in Lemma~\ref{lem:certificate}.

According to Lemma~\ref{lem:certificate}(iv), $d(s_i,r_i)\leq q(s_i)+W$ for some function $q: V\to{\mathbb R}_+$ and constant $W \ge 0$.
We shall actually establish that $d(v,r_i)\leq q(v)+W$ for all $v \in V$.

For any receiver labels $1\le i,j \le k$, put $\rho_{ij}:=d(r_i,r_j)$. For any three receiver labels $i,j,h$, define their {\em perimeter} by 
$P(i,j,h):=\rho_{ij}+\rho_{ih}+\rho_{jh}$. The triangle inequality gives $d(v,r_i)\leq d(v,r_j)+\rho_{ij}$ and $d(v,r_i)\leq d(v,r_h)+\rho_{ih}$. 
Adding these inequalities and dividing the sum by two, we obtain
\begin{equation*}\label{eq:perimeter-comparison}
\begin{aligned}
 d(v,r_i)
 &\leq
 \frac{d(v,r_j)+d(v,r_h)+\rho_{ij}+\rho_{ih}}2\\
 &=
 \frac{d(v,r_j)+d(v,r_h)-\rho_{jh}}2
 +\frac{P(i,j,h)}2.
\end{aligned}
\end{equation*}
Clearly, the first term is nonnegative by the triangle inequality. As the second term depends only on three receivers, it is
natural to upper-bound it by 

\vskip 1mm
{\bf (3.1)} $W=\frac12\max_{1 \le i,\, j,\, h \le k} P(i,j,h).$
\vskip 1mm

Observe that if $k\geq3$, then the maximum value is achieved at three distinct labels. Indeed, a triple with only two distinct labels $i,j$ 
has perimeter $2\rho_{ij}$. For any third label $h$, the triangle inequality gives $2\rho_{ij}\leq\rho_{ij}+\rho_{ih}+\rho_{jh}=P(i,j,h)$.
Moreover, a triple with all labels equal has perimeter $0$. Thus, renaming the labels if necessary, we may assume (see (3.1)) that

\vskip 1mm
{\bf (3.2)} $W= \frac12 P(1,2,3)= \frac12\max_{1 \le i,\, j,\, h \le k} P(i,j,h)$.
\vskip 1mm

We postpone until the next section the definition of the function $q$, which satisfies $d(v, r_i)-W \le q(v)$ for all $v\in V$ and $1\le i \le k$.

Let us now choose $\lambda_1,\lambda_2,\lambda_3$ so that all the conditions specified in Lemma~\ref{lem:certificate} are satisfied. To ensure condition (ii),
we impose that $u_m(r_m)=\lambda_m$ and $u_m(r_n)=0$ for $1\le m\ne n \le 3$. Since $q$ must vanish at all receivers, applying condition (iii) 
to $r_m$ and $r_n$ yields $\lambda_m+\lambda_n\leq\rho_{mn}$ for $1\le m \ne n \le 3$. We impose equality in all three inequalities; that is,

\vskip 1mm
{\bf (3.3)} $\lambda_m+\lambda_n=\rho_{mn}$ for $1\le m \ne n \le 3$.
\vskip 1mm

Solving this linear system gives

\vskip 1mm
{\bf (3.4)} $\lambda_m= \frac{1}{2} (\rho_{mn}+\rho_{mt}-\rho_{nt})$ for $\{m,n,t\}=\{1,2,3\}$. Obviously, 
$\lambda_m\geq 0$ and $\lambda_1+\lambda_2+\lambda_3=W$. 

\vskip 1mm

Recall the notation $L(x)$, $M(x)$, $\mu(x)$, $Q(x)$, $H(x)$, and $U(x)$ for $x \in{\mathbb R}^3$ introduced right above
Lemma \ref{lem:three}. Since the distances to a fixed receiver change by at most $\rho_{ij}$ when the receiver label is changed 
from $i$ to $j$, they provide a natural starting point for prescribing the values at receivers, as we require $1$-Lipschitz functions. 
Thus, for each receiver label $1\le i \le k$ and position $1\le m \le 3$, we define

\vskip 1mm
$\bullet$ $B(i)=\bigl(B_1(i),B_2(i),B_3(i)\bigr)$, where $B_m(i)=\rho_{im}-\lambda_m$; 
\vskip 1mm
$\bullet$ $g(i)=\bigl(g_1(i),g_2(i),g_3(i)\bigr)$, where $g_m(i)=-H(B(i))$ if $m=\mu(B(i))$ and $g_m(i)=H(B(i))$ otherwise. 
\vskip 1mm

Observe that the triple $B(i)$ consists of three adjusted distances. Since $Q(B(i))$ is not necessarily $0$, $B(i)$ cannot serve as 
$g(i)$; its entries must therefore be further adjusted to obtain $g(i)$. Moreover, if $L(B(i))=M(B(i))$, then $H(B(i))=0$, 
so $g(i)$ is well defined.

The following lemma shows that the values $Q(g(i))$ and $U_m(g(i))$ satisfy the conditions required of $q$ and $u_m$ at the receivers.

\begin{lemma}\label{zang}
For any receiver labels $1\le i,j\le k$ and positions $1\le m,n\le 3$, the following statements hold.
\begin{itemize}
\vspace{-1mm}
\item[(i)] $Q(g(i))=0$, $H(g(i))=H(B(i))$, and $U(g(i))=U(B(i))$;
\vspace{-1mm}
\item[(ii)] $0\leq U_m(g(i))\leq\lambda_m$;
\vspace{-1mm}
\item[(iii)] $U_m(g(m))=\lambda_m$ and $U_n(g(m))=0$ when $n\neq m$; and
\vspace{-1mm}
\item[(iv)] $\sum_{m=1}^3|U_m(g(i))-U_m(g(j))|\leq\rho_{ij}$.
\end{itemize}
\end{lemma}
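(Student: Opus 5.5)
The plan is to read off the structure of $g(i)$ directly, and to reduce (ii) to the linear relations (3.3) and (iv) to Lemma~\ref{lem:three}. Throughout, write $h_i:=H(B(i))\ge 0$ and $\mu_i:=\mu(B(i))$.

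For (i), observe that $g(i)$ has the entry $-h_i$ in position $\mu_i$ and $h_i$ in the other two positions. Hence $L(g(i))=-h_i$ and $M(g(i))=h_i$, which give $Q(g(i))=0$ and $H(g(i))=h_i=H(B(i))$. If $h_i>0$, the smallest entry of $g(i)$ is attained only at $\mu_i$, so $\mu(g(i))=\mu_i$ and therefore $U(g(i))=U(B(i))$. If $h_i=0$, both vectors $U$ vanish. In both cases $U(g(i))=U(B(i))$, consistent with $U$ being well defined despite the tie-breaking.

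Statement (ii) is the step that needs an actual estimate; I expect it to be the main obstacle. Since $U_m(g(i))$ is either $0$ or $h_i$, and it equals $h_i$ only for $m=\mu_i$, it suffices to show $h_i\le\lambda_{\mu_i}$. Write $\mu=\mu_i$ and pick any $n\ne\mu$. Then $L(B(i))=B_\mu(i)=\rho_{i\mu}-\lambda_\mu$. Also $M(B(i))$ is the smaller of the two entries other than $B_\mu(i)$, so $M(B(i))\le B_n(i)=\rho_{in}-\lambda_n$. Combining these with the triangle inequality $\rho_{in}-\rho_{i\mu}\le\rho_{\mu n}$ and with (3.3) gives
\[2h_i\le(\rho_{in}-\rho_{i\mu})-\lambda_n+\lambda_\mu\le\rho_{\mu n}-\lambda_n+\lambda_\mu=2\lambda_\mu .\]
Nonnegativity of $U_m(g(i))$ is immediate from $h_i\ge0$.

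For (iii), compute $B(m)$ directly. We have $B_m(m)=\rho_{mm}-\lambda_m=-\lambda_m$. For $n\ne m$, (3.3) gives $B_n(m)=\rho_{mn}-\lambda_n=\lambda_m$. So $B(m)$ has $-\lambda_m$ in position $m$ and $\lambda_m$ elsewhere. If $\lambda_m>0$, then $\mu(B(m))=m$ and $H(B(m))=\lambda_m$, so $U_m(g(m))=U_m(B(m))=\lambda_m$ and $U_n(g(m))=0$ for $n\ne m$. If $\lambda_m=0$, all these quantities are $0$, as claimed.

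For (iv), use (i) to replace $U(g(\cdot))$ by $U(B(\cdot))$. Then apply Lemma~\ref{lem:three} with $x=B(i)$ and $y=B(j)$, dropping the nonnegative $|Q(x)-Q(y)|$ term:
\[\sum_{m=1}^3|U_m(g(i))-U_m(g(j))|\le\max_{m}|B_m(i)-B_m(j)|=\max_m|\rho_{im}-\rho_{jm}|\le\rho_{ij},\]
where the final step is the triangle inequality.
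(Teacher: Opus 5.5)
Your proposal is correct and follows the paper's proof essentially step for step. You obtain (i) by reading off $L(g(i))=-H(B(i))$ and $M(g(i))=H(B(i))$, and (ii) from $2H(B(i))\le B_n(i)-B_{\mu}(i)\le\rho_{\mu n}-\lambda_n+\lambda_\mu=2\lambda_\mu$ with $\mu=\mu(B(i))$ and (3.3). You get (iii) by computing $B(m)$ directly, and (iv) by applying Lemma~\ref{lem:three} to $B(i),B(j)$ and transferring the bound through (i). Your explicit treatment of the degenerate cases $H(B(i))=0$ and $\lambda_m=0$, where $\mu$ is not uniquely determined, is slightly more careful than the paper, which simply asserts $\mu(B(m))=m$.
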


{\bf Proof}. (i) By definition, $g_m(i)=-H(B(i))$ if $m=\mu(B(i))$ and $g_m(i)=H(B(i))$ otherwise. So $L(g(i))=-H(B(i))$ and $M(g(i))=H(B(i))$. 
Hence $Q(g(i))=\frac{1}{2}[L(g(i))+M(g(i))]=0$ and $H(g(i))=\frac{1}{2}[M(g(i))-L(g(i))]=H(B(i))$. Moreover, $U_m(g(i))=H(g(i))=H(B(i))$ if $m=\mu(B(i))$
and $U_m(g(i))=0$ otherwise. Therefore, $U(g(i))=U(B(i))$.

\vskip 1mm
(ii) Recall that $U_m(g(i))=H(B(i))$ if $m=\mu(B(i))$ and $U_m(g(i))=0$ otherwise. So we only need to consider the case when $m=\mu(B(i))$.
Since $H(B(i))=\frac{1}{2}[M(B(i))-L(B(i)]$, we have $H(B(i))= \frac{1}{2} \min_{n\neq m} \bigl[B_n(i)-B_m(i)\bigr]$. By the triangle inequality, 
$B_n(i)-B_m(i)= (\rho_{in}-\lambda_n)- (\rho_{im}-\lambda_m) \leq\rho_{nm}-\lambda_n+\lambda_m  =2\lambda_m$ by (3.3). So $U_m(g(i))\leq\lambda_m$.

\vskip 1mm
(iii) By definition and (3.3), we have $B_m(m)=-\lambda_m$ and $B_n(m)=\rho_{mn}-\lambda_n=\lambda_m$ for $n\neq m$.
Thus $H(B(m))=\lambda_m$ and $\mu(B(m))=m$. By (i), we also have $H(g(m))=\lambda_m$. Moreover, by definition, $g_m(m)=-\lambda_m$ and $g_n(m)=\lambda_m$ 
for $n\neq m$. Hence $U_m(g(m))=\lambda_m$ and $U_n(g(m))=0$ when $n\neq m$.

\vskip 1mm
(iv) For each $1\le m \le 3$, the triangle inequality gives $|B_m(i)-B_m(j)|=|\rho_{im}-\rho_{jm}|\leq\rho_{ij}$.
Applying Lemma~\ref{lem:three} to $B(i),B(j)$ and dropping the nonnegative term $|Q(B(i))-Q(B(j))|$ yields
$\sum_{m=1}^3|U_m(B(i))-U_m(B(j))|  \leq\max_{1\leq m\leq3}|B_m(i)-B_m(j)|\leq\rho_{ij}$. From (i) we deduce that
$\sum_{m=1}^3|U_m(g(i))-U_m(g(j))|\leq\rho_{ij}$. \qed\\

Let us show that, for each $1\le m \le 3$, the function $g_m$ is 1-Lipschitz with respect to the receiver labels, and 
establish additional properties of each $g_m$ that allow us to extend the 1-Lipschitz property to all vertices.

\begin{lemma}\label{lem:g}
For any receiver labels $1\le i,j\le k$ and positions $1\le m,n\le 3$, the following statements hold.
\begin{itemize}
\vspace{-1mm}
\item[(i)] $g_m(i)=-U_m(g(i))+\sum_{n \neq m} U_n (g(i))$;
\vspace{-1mm}
\item[(ii)] $|g_m(i)-g_m(j)|\leq\rho_{ij}$;
\vspace{-1mm}
\item[(iii)] $g_m(i)+g_n(j)\geq-\rho_{ij}$ when $m\neq n$; and
\vspace{-1mm}
\item[(iv)] $g_n(i)-g_m(i)\leq2\lambda_m$ when $m\neq n$.
\end{itemize}
\end{lemma}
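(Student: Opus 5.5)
Part (i) is a direct unwinding of the definitions. By Lemma~\ref{zang}(i), $U(g(i))=U(B(i))$, so $U_{\mu(B(i))}(g(i))=H(B(i))$ and the other two $U$-coordinates vanish. Let $h:=H(B(i))$ and $p:=\mu(B(i))$. For $m=p$ the right-hand side equals $-h+0=-h=g_p(i)$. For $m\neq p$ it equals $0+h=h=g_m(i)$, since exactly one $n\neq m$, namely $n=p$, contributes $h$.

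For (ii), I would use (i) to write $g_m(i)-g_m(j)$ as a signed combination of the differences $U_n(g(i))-U_n(g(j))$, with coefficient $-1$ for $n=m$ and $+1$ otherwise. Its absolute value is then at most $\sum_{n=1}^3|U_n(g(i))-U_n(g(j))|$, which is at most $\rho_{ij}$ by Lemma~\ref{zang}(iv).

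For (iii), again use (i). Since all $U$-values are nonnegative, $g_m(i)\ge -U_m(g(i))+U_n(g(i))$ and $g_n(j)\ge -U_n(g(j))+U_m(g(j))$, so
\[g_m(i)+g_n(j)\ge -\bigl[U_m(g(i))-U_m(g(j))\bigr]-\bigl[U_n(g(j))-U_n(g(i))\bigr]\ge -\sum_{t=1}^3|U_t(g(i))-U_t(g(j))|\ge-\rho_{ij},\]
using Lemma~\ref{zang}(iv) once more.

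For (iv), with $m\ne n$, the difference $g_n(i)-g_m(i)$ is nonpositive unless $m=\mu(B(i))$: if $n=\mu(B(i))$ it equals $-2h\le 0$, and if neither index is $\mu(B(i))$ it equals $0$. In the remaining case $m=\mu(B(i))$, the difference equals $2H(B(i))=2U_m(g(i))\le 2\lambda_m$ by Lemma~\ref{zang}(ii).

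There is no real obstacle; the only care needed is the sign bookkeeping in (iii), where the terms are grouped so that each appears as a difference of the same coordinate at $i$ and $j$.
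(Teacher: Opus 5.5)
Your proof is correct and follows the paper's argument: parts (i) and (ii) are identical, and (iii) and (iv) differ only cosmetically. For (iii) the paper uses $g_m(i)+g_n(i)\ge 0$ together with part (ii), and for (iv) it uses the identity $g_n(i)-g_m(i)=2U_m(g(i))-2U_n(g(i))$, whereas you expand via (i) directly and split into cases; all of these rest on the same facts from Lemma~\ref{zang}(ii),(iv).
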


{\bf Proof}. (i) By definition, $g_m(i)=-H(B(i))$ if $m=\mu(B(i))$ and $g_m(i)=H(B(i))$ otherwise. By Lemma \ref{zang}(i), 
$U_m(g(i))=H(B(i))$ if $m=\mu(B(i))$ and $U_m(g(i))=0$ otherwise. Depending on whether $m=\mu(B(i))$, we consider two cases. It is straightforward 
to verify that the desired equality holds in both cases.    

\vskip 1mm
(ii) By (i) and by Lemma~\ref{zang}(iv), we obtain $|g_m(i)-g_m(j)|\leq\sum_{n=1}^3|U_n(g(i))-U_n(g(j))|\leq\rho_{ij}$. 

\vskip 1mm
(iii) Since $1\le m \ne n \le 3$, we have $g_m(i)+g_n(i)\geq 0$. So $g_m(i)+g_n(j)=g_m(i)+g_n(i)-\bigl(g_n(i)-g_n(j)\bigr)\geq -|g_n(i)-g_n(j)|\geq
-\rho_{ij}$ by (ii).

\vskip 1mm
(iv) From (i) and Lemma~\ref{zang}(ii), we deduce that $g_n(i)-g_m(i)=2U_m(g(i))-2U_n(g(i)) \leq 2U_m(g(i)) \leq2 \lambda_m$.  \qed

\section{Proof of the Main Result}

For each position $1\le m \le 3$, Lemma \ref{lem:g}(ii) asserts that the function $g_m$ is 1-Lipschitz with respect to the receiver labels. 
Let us now extend the prescribed receiver values and the 1-Lipschitz property to all vertices. Motivated by McShane's extension theorem
\cite{M}, we define 
\[F_m(v):=\min_{1\le i \le k} \, \bigl(d(v,r_i)+g_m(i)\bigr)\]
\noindent for $v\in V$ and $1\le m \le 3$. 

\begin{lemma}\label{lem:F}
For each $1\le m \le 3$, the function $F_m$ satisfies $F_m(r_i)=g_m(i)$ for $1\le i \le k$, and $F_m$ is $1$-Lipschitz
on $V$.
\end{lemma}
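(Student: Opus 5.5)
This is the standard McShane extension argument, and it splits into two independent parts.

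\emph{Interpolation.} Fix $m$ and a receiver $r_i$. Choosing the index $i$ in the minimum defining $F_m(r_i)$ gives $F_m(r_i)\le d(r_i,r_i)+g_m(i)=g_m(i)$. For the reverse inequality, take any $1\le j\le k$. Since $d(r_i,r_j)=\rho_{ij}$, Lemma~\ref{lem:g}(ii) yields $g_m(i)-g_m(j)\le\rho_{ij}$, that is,
\[d(r_i,r_j)+g_m(j)\ge g_m(i).\]
Taking the minimum over $j$ gives $F_m(r_i)\ge g_m(i)$. Hence $F_m(r_i)=g_m(i)$.

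\emph{Lipschitz property.} Let $v,w\in V$ and let $i$ be an index attaining the minimum in $F_m(w)$, so $F_m(w)=d(w,r_i)+g_m(i)$. By the triangle inequality for $d$,
\[F_m(v)\le d(v,r_i)+g_m(i)\le d(v,w)+d(w,r_i)+g_m(i)=d(v,w)+F_m(w).\]
Exchanging the roles of $v$ and $w$ gives $|F_m(v)-F_m(w)|\le d(v,w)$.

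Neither step poses a real obstacle. The only point needing care is that the interpolation relies on the receiver-level Lipschitz bound, Lemma~\ref{lem:g}(ii), together with the identification $d(r_i,r_j)=\rho_{ij}$. Everything else is the triangle inequality, which holds because $d=d_\ell$ is a shortest-path (pseudo)metric on the connected graph $G$.
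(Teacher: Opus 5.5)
Your proof is correct and follows the same route as the paper's. You get interpolation from Lemma~\ref{lem:g}(ii), which makes the minimum at $r_i$ attained at $j=i$, and you get the $1$-Lipschitz bound from the triangle inequality for $d$ plus symmetry. Picking a minimizer for $F_m(w)$ instead of taking the minimum over $j$ afterwards is only a cosmetic difference.
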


\vspace{-1mm}

{\bf Proof}. By definition, $F_m(r_i)=\min_{1\le j \le k} \, \bigl(d(r_i,r_j)+g_m(j)\bigr)= \min_{1\le j \le k} \, \bigl(\rho_{ij}+g_m(j)\bigr)$.
By Lemma~\ref{lem:g}(ii),  $\rho_{ij}+g_m(j)\geq g_m(i)=\rho_{ii}+g_m(i)$. So the minimum value is achieved when $j=i$; that is,  
$F_m(r_i)=g_m(i)$.

Next, for any $v,w\in V$ and $1\le j\le k$, the triangle inequality gives $d(v,r_j)+g_m(j)\leq d(v,w)+d(w,r_j)+g_m(j)$.
Taking the minimum over $j$ yields $F_m(v)\leq d(v,w)+F_m(w)$. By symmetry, the inequality $F_m(w)\leq d(v,w)+F_m(v)$ also holds. 
So $|F_m(v)-F_m(w)|\leq d(v,w)$. Therefore $F_m$ is $1$-Lipschitz on $V$.  \qed\\

Apply the transformations introduced right above Lemma \ref{lem:three} to $F(v)=(F_1(v),F_2(v), F_3(v))$, and set
\[q(v):=Q(F(v)) \text{  and  } u_m(v):=U_m(F(v)) \text{  for  } v\in V \text{  and  } 1\le m \le 3.\]

Let us show that they satisfy all the requirements of Lemma~\ref{lem:certificate}.

\begin{lemma}\label{lem:vertex}
For the functions $q, u_1, u_2, u_3$, the following statements hold.
\begin{itemize}
\vspace{-1mm}
\item[(i)] $q(v)\geq0$ for each $v\in V$, and $q(r_i)=0$ for each $1\le i \le k$;
\vspace{-1mm} 
\item[(ii)]  $u_m(r_m)=\lambda_m$ and $u_m(r_n)=0$ for each $1\le m\ne n \le 3$;
\vspace{-1mm}
\item[(iii)] $0\leq u_m(v)\leq\lambda_m$ for each $v\in V$ and $1\le m \le 3$; and
\vspace{-1mm}
\item[(iv)] $|q(v)-q(w)|+\sum_{m=1}^3|u_m(v)-u_m(w)|\leq d(v,w)$ for each pair $v,w\in V$.
\end{itemize}
\end{lemma}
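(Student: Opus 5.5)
The plan is to reduce each item to the receiver-level results of Section 3 via Lemma~\ref{lem:F}, plus two short pointwise computations for the inequalities at arbitrary vertices. Throughout, write $F(v)=(F_1(v),F_2(v),F_3(v))$.

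\emph{Receiver values and (iv).} By Lemma~\ref{lem:F}, $F(r_i)=g(i)$ for every $i$. Hence $q(r_i)=Q(g(i))=0$ by Lemma~\ref{zang}(i). Likewise $u_m(r_n)=U_m(g(n))$, so Lemma~\ref{zang}(iii) gives (ii) at once. For (iv), each $F_m$ is $1$-Lipschitz by Lemma~\ref{lem:F}. Applying Lemma~\ref{lem:three} to $x=F(v)$ and $y=F(w)$ bounds the left-hand side of (iv) by $\max_m|F_m(v)-F_m(w)|\le d(v,w)$, exactly as in the computation at the end of Section~2.

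\emph{Nonnegativity of $q$.} Note that $L(F(v))+M(F(v))$ equals $F_m(v)+F_n(v)$ for some pair of distinct positions $m\neq n$. So it suffices to show $F_m(v)+F_n(v)\ge 0$ whenever $m\ne n$. Choose labels $i,j$ attaining the minima in the definitions of $F_m(v)$ and $F_n(v)$. Then
\[F_m(v)+F_n(v)=d(v,r_i)+d(v,r_j)+g_m(i)+g_n(j)\ge \rho_{ij}+g_m(i)+g_n(j)\ge 0,\]
using the triangle inequality and then Lemma~\ref{lem:g}(iii). Hence $q(v)\ge0$, which completes (i).

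\emph{The bound in (iii).} This is the step needing the most care. Since $H\ge0$, we have $u_m(v)\ge 0$ trivially, and $u_m(v)=0$ unless $m=\mu(F(v))$. In the remaining case $m=\mu(F(v))$, we have $L(F(v))=F_m(v)$ and $M(F(v))=\min_{n\ne m}F_n(v)$. Thus it suffices to show $F_n(v)-F_m(v)\le 2\lambda_m$ for every $n\neq m$. Let $i$ attain the minimum defining $F_m(v)$. Then, by Lemma~\ref{lem:g}(iv),
\[F_n(v)\le d(v,r_i)+g_n(i)\le d(v,r_i)+g_m(i)+2\lambda_m=F_m(v)+2\lambda_m.\]
Therefore $u_m(v)=H(F(v))\le\lambda_m$. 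The only subtlety is the tie-breaking in $\mu$. If $L(F(v))=M(F(v))$, then all $u_m(v)=0$ and the claim is trivial. Otherwise the argument above applies to whichever index is chosen as $\mu(F(v))$.
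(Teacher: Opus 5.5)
Your proposal is correct and follows the paper's proof essentially step for step. You obtain the receiver values via Lemma~\ref{lem:F} and Lemma~\ref{zang}, part (iv) via Lemma~\ref{lem:three}, nonnegativity of $q$ via the triangle inequality and Lemma~\ref{lem:g}(iii), and the bound $u_m(v)\le\lambda_m$ by taking a minimizing label for $F_m(v)$ and applying Lemma~\ref{lem:g}(iv), with the trivial case resting on $\lambda_m\ge 0$ from (3.4), exactly as in the paper.
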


{\bf Proof}. (i) By definitions, $q(v)=Q(F(v))= \frac{1}{2} [L(F(v))+M(F(v))]= \frac{1}{2} \min_{1\leq m<n\leq 3} \, (F_m(v)+F_n(v))$. 
For fixed $m <n$, we have $F_m(v)+F_n(v) = \min_{1\le i \le k} \, \bigl(d(v,r_i)+g_m(i)\bigr)
+ \min_{1\le j \le k}\, \bigl(d(v,r_j)+g_n(j)\bigr)= \min_{1\le i, \, j \le k} \, \bigl(d(v,r_i)+d(v,r_j)+g_m(i)+g_n(j)\bigr)$.
By Lemma~\ref{lem:g}(iii) and the triangle inequality, $d(v,r_i)+d(v,r_j)+g_m(i)+g_n(j) \geq d(v,r_i)+d(v,r_j)-\rho_{ij}\geq 0$.
So  $q(v)\geq0$ for $v\in V$. By Lemma~\ref{lem:F}, $F(r_i)=g(i)$, which together with Lemma~\ref{zang}(i) implies that 
$q(r_i)=Q(F(r_i))=Q(g(i))=0$.

\vskip 1mm 

(ii) Since $F(r_i)=g(i)$, we have $u_m(r_i)=U_m(F(r_i))=U_m(g(i))$. From Lemma~\ref{zang}(iii) we see that $u_m(r_m)=\lambda_m$ and $u_m(r_n)=0$ for 
$1\le m\ne n \le 3$. 

\vskip 1mm

(iii) By definition, $u_m(v)=U_m(F(v))$ for $v\in V$. Clearly, $u_m(v) \geq 0$. If $u_m(v)=0$, then $u_m(v) \le \lambda_m$ by
(3.4). So we may assume that $u_m(v)>0$. Hence, from the definition of $U(x)$, we see that $F_m(v)$ is the unique smallest entry of 
$F(v)$. Let $n$ be the subscript with $1\le n \le 3$ such that $F_n(v)=M(F(v))$. Then $n \ne m$. Recall the definition of $F_m(v)$.
Let $t$ be a subscript with $1 \le t \le k$ such that $F_m(v)=d(v,r_t)+g_m(t)$. Then $F_n(v)\leq d(v,r_t)+g_n(t)\leq d(v,r_t)+g_m(t)+
2\lambda_m=F_m(v)+2\lambda_m$, where the second inequality follows from Lemma~\ref{lem:g}(iv). This bound implies that $u_m(v)=
\frac{1}{2} [F_n(v)-F_m(v)] \leq\lambda_m$.  

\vskip 1mm
(iv) By Lemma~\ref{lem:three} and the definitions of $q,u$, we have $|q(v)-q(w)|+\sum_{m=1}^3|u_m(v)-u_m(w)|
= |Q(F(v))-Q(F(w))|+\sum_{m=1}^3|U_m(F(v))-U_m(F(w))| \leq\max_{1\leq m\leq3} |F_m(v)-F_m(w)| \leq d(v,w)$, where the last 
inequality follows from the $1$-Lipschitz property in Lemma~\ref{lem:F}.  \qed\\

The following lemma establishes that an even stronger form of condition (iv) in Lemma~\ref{lem:certificate} holds.

\begin{lemma}\label{lem:source}
The inequality $d(v,r_i)\leq q(v)+W$ holds for all $v\in V$ and $1\le i \le k$. 
\end{lemma}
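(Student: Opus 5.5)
The plan is to unwind the definition of $q$ as a minimum over pairs of positions and receiver labels, and reduce the claim to Lemma~\ref{lem:g}(iii) together with the maximality of the perimeter in (3.2). As in the proof of Lemma~\ref{lem:vertex}(i), we have
\[
2q(v)=\min_{1\le m<n\le 3}\bigl(F_m(v)+F_n(v)\bigr)=\min_{1\le m<n\le 3}\ \min_{1\le a,\,b\le k}\bigl(d(v,r_a)+d(v,r_b)+g_m(a)+g_n(b)\bigr).
\]
Hence it suffices to show that, for every $v\in V$, every receiver label $i$, all positions $m\ne n$, and all receiver labels $a,b$ (not necessarily distinct),
\[
2d(v,r_i)-2W\le d(v,r_a)+d(v,r_b)+g_m(a)+g_n(b).
\]

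The first step bounds the left-hand side by the triangle inequality. Since $d(v,r_i)\le d(v,r_a)+\rho_{ia}$ and $d(v,r_i)\le d(v,r_b)+\rho_{ib}$, adding gives
\[
2d(v,r_i)\le d(v,r_a)+d(v,r_b)+\rho_{ia}+\rho_{ib}.
\]
So it is enough to prove $\rho_{ia}+\rho_{ib}-2W\le g_m(a)+g_n(b)$.

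The second step uses (3.1)--(3.2). The maximum in (3.1) is taken over all label triples, repetitions allowed, so $2W\ge P(i,a,b)=\rho_{ia}+\rho_{ib}+\rho_{ab}$. This yields
\[
\rho_{ia}+\rho_{ib}-2W\le-\rho_{ab}.
\]
The degenerate cases $a=b$ or $i\in\{a,b\}$ are covered automatically, since the triples are unrestricted and $\rho_{aa}=0$.

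The final step invokes Lemma~\ref{lem:g}(iii), which gives $g_m(a)+g_n(b)\ge-\rho_{ab}$ for $m\ne n$. Chaining the three inequalities gives the required bound for every choice of $m<n$ and $a,b$. Taking the minimum over these choices yields $2d(v,r_i)-2W\le 2q(v)$, which is the claim.

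I do not expect a genuine obstacle. The only point needing care is that the minimum defining $q$ ranges over pairs of \emph{distinct} positions $m\ne n$ but arbitrary receiver labels $a,b$. This is exactly the setting in which Lemma~\ref{lem:g}(iii) applies, and in which the perimeter bound holds with repeated labels allowed.
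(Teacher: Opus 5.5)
Your proposal is correct and follows the paper's proof step for step. Both arguments expand $2q(v)$ as a double minimum over positions $m<n$ and labels $a,b$. Both then combine the triangle inequality $2d(v,r_i)\le d(v,r_a)+d(v,r_b)+\rho_{ia}+\rho_{ib}$, the perimeter bound $P(i,a,b)\le 2W$ from (3.2), and Lemma~\ref{lem:g}(iii); only the order in which these are chained differs.
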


{\bf Proof}. By definitions, $q(v)=Q(F(v))= \frac{1}{2} [L(F(v))+M(F(v))]= \frac{1}{2} \min_{1\leq m<n\leq 3} \, (F_m(v)+F_n(v))$. 
For fixed $m <n$, we have $F_m(v)+F_n(v) = \min_{1\le j \le k} \, \bigl(d(v,r_j)+g_m(j)\bigr) + \min_{1\le h \le k} \, \bigl(d(v,r_h) \\+g_n(h)\bigr)= 
\min_{1\le j, \, h \le k} \, \bigl(d(v,r_j)+d(v,r_h)+g_m(j)+g_n(h)\bigr)$. By Lemma~\ref{lem:g}(iii) and the triangle inequality, 
$d(v,r_j)+d(v,r_h)+g_m(j)+g_n(h) \geq d(v,r_j)+d(v,r_h)-\rho_{jh}= (d(v,r_j)+\rho_{ij})+(d(v,r_h)+\rho_{ih})-P(i,j,h) 
\geq 2d(v,r_i)-P(i,j,h)$.  By (3.2), we thus obtain

\vskip 1mm
(1) $d(v,r_j)+d(v,r_h)+g_m(j)+g_n(h) \geq 2d(v,r_i)-2W.$
\vskip 1mm

Since inequality (1) holds for all $1 \le j, h \le k$, it remains valid after taking the minimum over all of them. Consequently,

\vskip 1mm
(2) $F_m(v)+F_n(v)\geq 2d(v,r_i)-2W$.
\vskip 1mm

Since inequality (2) holds for all $m< n$, we deduce that $q(v)=\frac{1}{2} \min_{1\le m< n \le 3} \, \bigl(F_m(v)+F_n(v)\bigr)
\geq d(v,r_i)-W$. Therefore $d(v,r_i)\leq q(v)+W$ for all $v\in V$ and $1\le i \le k$. \qed\\

We are now ready to establish the main result.

\vskip 1mm
{\bf Proof of Theorem~\ref{cor:length}.} By (3.4), we have $\lambda_1,\lambda_2,\lambda_3\geq0$ and $W=\lambda_1+\lambda_2+\lambda_3$. By
Lemmas \ref{lem:vertex} and \ref{lem:source}, the functions $q,u_1,u_2,u_3$ constructed above satisfy all the conditions of 
Lemma~\ref{lem:certificate}.  Therefore $\sum_{i=1}^k d(s_i,r_i)\leq\sum_{e\in E}\ell(e)$ by Lemma~\ref{lem:certificate}. \qed

\end{document}